\def\BibTeX{{\rm B\kern-.05em{\sc i\kern-.025em b}\kern-.08em
    T\kern-.1667em\lower.7ex\hbox{E}\kern-.125emX}}
\def\abstract{ \noindent \small\bf Abstract: \hspace{3mm} \small }
\def\@begintheorem#1#2{\tmpitemindent\itemindent\topsep 0pt\rm\trivlist
    \item[\hskip \labelsep{\indent\it #1\ #2:}]\itemindent\tmpitemindent}
\def\@opargbegintheorem#1#2#3{\tmpitemindent\itemindent\topsep 0pt\rm \trivlist
    \item[\hskip\labelsep{\indent\it #1\ #2\ \rm(#3)}]\itemindent\tmpitemindent}
\def\@endtheorem{\endtrivlist\unskip}
\newtheorem {lemma}{Lemma}
\begin{document}

\title{A Design of Paraunitary Polyphase Matrices of Rational Filter Banks Based on $(P,Q)$ Shift-Invariant Systems}

\author{Sudarshan Shinde \\
        Computational Research Laboratories \\
        Pune-04,INDIA. \\
        Email:sudarshan\_shinde@iitbombay.org}
\date{}

\maketitle



\pagestyle{plain}

\begin{abstract}
In this paper we present a method to design paraunitary polyphase matrices of critically sampled rational filter banks. The method is based on $(P,Q)$ shift-invariant systems, and so any kind of rational splitting of the frequency spectrum can be achieved using this method. Ideal $(P,Q)$ shift-invariant system with smallest $P$ and $Q$ that map of a band of input spectrum to the output spectrum are obtained. A new set of filters is obtained that characterize a $(P,Q)$ shift-invariant system. Ideal frequency spectrum of these filters are obtained using ideal $(P,Q)$ shift-invariant systems. Actual paraunitary polyphase matrices are then obtained by minimizing the stopband energies of these filters against the parameters of the paraunitary polyphase matrices.   
\end{abstract}

\begin{keywords}
Rational filter bank, filter design, optimization, paraunitary matrix.
\end{keywords}

\section{Introduction}
\label{sec_intro}

\subsection{Motivation} 

Rational Filter Banks (RFB) are now well known tools for nonuniform subband decomposition of the signal. Given $N$ rational numbers $\{\frac{p_{0}}{q_{0}},\cdots, \frac{p_{N-1}}{q_{N-1}}\}$, where $\sum_{i=0}^{N-1}{\frac{p_{i}}{q_{i}}} = 1$, the task of an $N$ channel RFB is to split the frequency spectrum $[0,\pi]$ of the signal into $N$ nonuniform bands, where $n$-th band is given by $[\sum_{i=0}^{n}{\frac{p_{i}}{q_{i}}}\pi,\sum_{i=0}^{n+1}{\frac{p_{i}}{q_{i}}}\pi]$. 

An RFB uses a $P/Q$ sample rate changer, shown in Fig.(\ref{fig_rational_branch}) in each channel. 

\begin{figure*}[t]
\centerline{\includegraphics[scale=0.4]{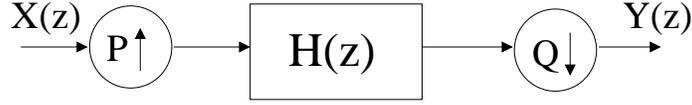}}
\caption{A typical branch of a rational filter bank}
    \label{fig_rational_branch}
\end{figure*}


Such RFB are treated in detail in \cite{art_vetterli_rationalFB} and it is shown that all kind of rational band splitting can not be achieved by the RFB based on the above $P/Q$ sample rate changer, even if the ideal real coefficients filters are used. 

If $P$ and $Q$ are coprime, the structure shown in Fig.(\ref{fig_rational_branch}) can be converted into a structure shown in Fig.(\ref{fig_gen_poly}). In this figure ${\bf H}_{p}(z)$ is a $P \times Q$ polyphase matrix that consists of polyphase components of $H(z)$. It is shown in \cite{art_shenoy_multirate} that this structure is more general than the previous structure, and if $P$ and $Q$ are allowed to have common factors, then any kind of rational splitting of the input frequency spectrum can be achieved.

\begin{figure*}[t]
\centerline{\includegraphics[scale=0.4]{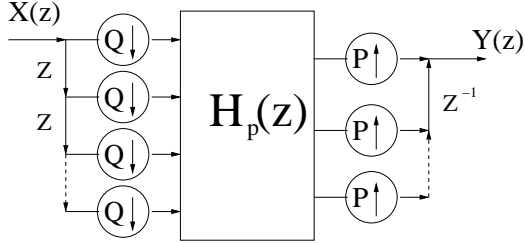}}
\caption{Polyphase structure for RFB implementation}
    \label{fig_gen_poly}
\end{figure*}

Such structures are called $(P,Q)$ shift-invariant systems, since shifting the input by $Q$ samples results in shifting the output by $P$ samples. Many properties of $(P,Q)$ shift-invariant systems are studied in \cite{art_chen_general}. It is shown that such structures can be represented by a $P \times Q$ LTI polyphase matrix ${\bf H}_{p}(z)$. In other words, let input $x(z)$ and output $y(z)$ be written in polyphase form as $x(z) = \sum_{i=0}^{Q-1}{z^{-i}x_{i}(z^{Q})}$ and $y(z) = \sum_{i=0}^{P-1}{z^{-i}y_{i}(z^{P})}$ respectively, and let
 
\begin{eqnarray}
{\bf x}_{p}(z) &=& [x_{0}(z),x_{1}(z),\cdots,x_{Q-1}(z)]^{T} \nonumber \\ 
{\bf y}_{p}(z) &=& [y_{0}(z),y_{1}(z),\cdots,y_{P-1}(z)]^{T} 
    \label{eq_polyphase_xy}
\end{eqnarray} 
then
\begin{equation}
{\bf y}_{p}(z) = {\bf H}_{p}(z) {\bf x}_{p}(z).
    \label{eq_polyphase}
\end{equation}  

Let $S_{x}(\omega_{1},\omega_{2})$ represent the frequency spectrum of the signal $x(n)$ between frequencies $\omega_{1} \le \omega \le \omega_{2}$ . The $(P,Q)$ shift-invariant systems described above can be used to obtain the ideal mapping of an input subband $S_{x}(\frac{p_{1}}{q_{1}}\pi,\frac{p_{2}}{q_{2}}\pi)$ on the output band $S_{y}(0,\pi)$ or $S_{y}(-\pi,0)$ \cite{art_shenoy_multirate}. Let 
\begin{eqnarray}
{\bf x}_{m}(z) &=& [x(z),x(zW_{Q}),\cdots,x(zW_{Q}^{Q-1})]^{T} \nonumber \\
{\bf y}_{m}(z) &=& [y(z),y(zW_{P}),\cdots,y(zW_{P}^{P-1})]^{T} 
    \label{eq_modulation_xy}
\end{eqnarray} 
then it can be shown that
\begin{eqnarray}
{\bf x}_{p}(z) 
    &=& \frac{1}{Q}{\bf \Gamma}(z^{1/Q}){\bf W}_{Q}{\bf x}_{m}(z^{1/Q})     
                                           \nonumber \\
{\bf y}_{p}(z) 
    &=& \frac{1}{P}{\bf \Gamma}(z^{1/P}){\bf W}_{P}{\bf y}_{m}(z^{1/P})
    \label{eq_poly_to_mod_2}
\end{eqnarray}   
where $[{\bf \Gamma}(z)]_{ij} = z^{i}\delta_{ij}$ and $[{\bf W}_{N}]_{ij} = W_{N}^{ij}$. Substituting from (\ref{eq_poly_to_mod_2}) into (\ref{eq_polyphase}), we get 

\begin{equation}
{\bf y}_{m}(z^{1/P}) = {\bf H}_{m}(z^{1/K}){\bf x}_{m}(z^{1/Q})
    \label{eq_poly_to_mod_3}
\end{equation} 
where $K = lcm(P,Q)$, and ${\bf H}_{m}(z^{1/K})$ will be given by 

\begin{equation}
{\bf H}_{m}(z^{1/K}) 
              = \frac{1}{Q} {\bf W}_{P}^{\dagger}{\bf \Gamma}_{P}(z^{-1/P})
                             {\bf H}_{p}(z)
                             {\bf \Gamma}_{mq}(z^{1/Q}){\bf W}_{Q}   
    \label{eq_h_mod_2} 
\end{equation} 

${\bf H}_{m}(z)$ is called alias component matrix, or modulation matrix. It can be seen that (\ref{eq_poly_to_mod_3}) gives a relationship between input subbands and output subbands. By proper choice of the dimensions $(P,Q)$ of the modulation matrix ${\bf H}_{m}(z)$ and by proper selection of the elements of ${\bf H}_{m}(z)$ from $\{0,1\}$ any required mapping can be obtained. It is shown in \cite{art_shenoy_multirate} that mapping $S_{x}(\frac{p_{1}}{q_{1}},\frac{p_{2}}{q_{2}}) \rightarrow S_{y}(0,\pi)$ can be obtained by using a $(P,Q)$ shift-invariant system, or a $P \times Q$ modulation matrix, where $Q = 2lcm(q_{1},q_{2})$ and $P = Q\left(\frac{p_{2}}{q_{2}} - \frac{p_{1}}{q_{1}} \right)$. A modulation matrix ${\bf H}_{m}(z)$ consisting of elements from $\{0,1\}$ will be referred as Ideal Modulation Matrix (IMM).

In this paper we consider the problem of obtaining the paraunitary polyphase matrices of the filter banks that can be realized by $(P,Q)$ shift-invariant systems. In the next subsection we define the problem more precisely and highlight the contributions of the paper.

\subsection{Problem definition and contributions of the paper}
Let us say that we are given any $N$ rational numbers $\{\frac{p_{1}}{q_{1}},\cdots,\frac{p_{N-1}}{q_{N-1}}\}$. It is required to split the input frequency band $[0,\pi]$ into $N$ bands, such that $n$-th channel gives the band $[\sum_{i=0}^{n}{\frac{p_{i}}{q_{i}}}\pi,\sum_{i=0}^{n+1}{\frac{p_{i}}{q_{i}}}\pi]$. Our objective is to design a paraunitary polyphase matrix of minimum dimensions that achieves this splitting. In order to achieve this objective, we address the following issues: 
\begin{enumerate}
     \item If an input subband 
           $S_{x}(\frac{p_{1}\pi}{q_{1}},\frac{p_{2}\pi}{q_{2}})$ is to be mapped to
           output band $S_{y}(0,\pi)$, or $S_{y}(-\pi,0)$, what would be the 
           smallest dimensions of the ideal modulation matrix ?
           We show in section \ref{sec_ideal_modulation_mat} that though the
           above mapping can be obtained by a $P \times Q$ ideal modulation
           matrix, where $Q = 2lcm(q_{1},q_{2})$ and 
           $P = Q\left(\frac{p_{2}}{q_{2}} - \frac{p_{1}}{q_{1}} \right)$,
           as given in \cite{art_shenoy_multirate}, these are not the 
           smallest dimensions in all the cases. In the same section we obtain
           smallest dimension ideal modulation modulation matrices for a
           given  mapping.    
     \item We prove that a $(P,Q)$ shift-invariant system can be characterized
           by $gcd(P,Q)$ filters. This result as such is not new. Many 
           structures that realize $(P,Q)$ shift-invariant systems using 
           $gcd(P,Q)$ filters are given in \cite{art_mehr_representations}.
           However we could not find a method to use these structures
           to design an RFB based on these structures. 
           In section \ref{sec_modulation_mat} we obtain a set 
           of $gcd(P,Q)$ filters
           that is different from the filters given in 
           \cite{art_mehr_representations}. We show that using the ideal
           modulation matrices, we can obtain ideal frequency spectrums of 
           these filters. These ideal frequency spectrums are then used to 
           form an objective function. By minimizing this objective function
           we obtain the required paraunitary polyphase matrix. 
\end{enumerate}  

\subsection{Notations} 
The notations that we use in this paper are as follows. The frequency spectrum of a signal $x(n)$ between $\omega_{1}$ and $\omega_{2}$ will be denoted by $S_{x}(\omega_{1},\omega_{2})$. All vectors will be represented by boldface letters and matrices will be represented by upper boldface letters. For two integers $k$ and $m$, $k$ {\it modulo} $m$ is denoted by $(k)_{m}$. For a matrix ${\bf H}$, the $lm$-th element is denoted by $[{\bf H}]_{lm}$. A set of integers from $N_{1}$ to $N_{2}$ will be denoted by $\{N_{1}.. N_{2}\}$. For a polynomial matrix ${\bf D}(z)$, $\tilde{\bf D}(z) = {\bf D}^{T}_{*}(z^{-1})$, where the $*$ subscript indicates taking complex conjugate of polynomial coefficients. $\chi(\omega_{1},\omega_{2})$ denote a frequency spectrum that is $1$ for $[\omega_{1},\omega_{2}]$ and zero elsewhere.

\section{Structure of the Ideal Modulation Matrices}
\label{sec_ideal_modulation_mat}

We recall that the modulation domain equation is given by (\ref{eq_poly_to_mod_3}). Putting $z = exp(j\omega)$ in this equation, we get
\begin{equation}
{\bf y}_{m}(e^{j\omega/P}) 
       = {\bf H}_{m}(e^{j\omega/K}){\bf x}_{m}(e^{j\omega/Q})
    \label{eq_poly_to_mod_3_omega}
\end{equation} 
It can be noted that for $\omega \in [-\pi,\pi]$, the elements of ${\bf y}_{m}(e^{j\omega/mp})$ cover disjoint segments of $S_{y}(-\pi,\pi)$ and the union of these segments is equal to $S_{y}(-\pi,\pi)$. Consider a subset $(\omega_{1},\omega_{2})$ of $[-\pi,\pi]$. For this subset we define an ideal modulation matrix ${\bf H}_{I}(\omega_{1},\omega_{2})$ to be a matrix that maps some segments of $S_{x}$ on the respective segments of $S_{y}$ according to given desired mapping when $\omega \in (\omega_{1},\omega_{2})$. We say that a mapping is possible if we can obtain disjoint subsets of $[-\pi,\pi]$ such that their union is $[-\pi,\pi]$ and for each subset, we can obtain an ideal modulation matrix.

{\it Example :} Consider the following mapping

\begin{eqnarray}
S_{x}(\frac{2\pi}{5},\pi) 
 & \rightarrow & S_{y}(0,\pi) \nonumber \\
S_{x}(-\pi,-\frac{2\pi}{5}) 
 & \rightarrow & S_{y}(-\pi,0) 
    \label{eq_map_ex}
\end{eqnarray}
For this mapping we have ${\bf H}_{I}(-\pi,0)$ and ${\bf H}_{I}(0,\pi)$ are given by
\begin{eqnarray}
{\bf H}_{I}(-\pi,0) & = & \left[
                            \begin{array}{ccccc}
                            0 & 1 & 0 & 0 & 0 \\
                            0 & 0 & 1 & 0 & 0 \\
                            0 & 0 & 0 & 1 & 0 
                            \end{array}
                          \right] \nonumber \\ 
{\bf H}_{I}(0,\pi)  & = & \left[
                            \begin{array}{ccccc}
                            0 & 0 & 0 & 0 & 1 \\
                            0 & 0 & 1 & 0 & 0 \\
                            0 & 0 & 0 & 1 & 0 
                            \end{array}
                          \right]  
    \label{eq_ideal_mod_mat_ex}
\end{eqnarray}  
It can be seen that ${\bf y}_{m}(e^{j\omega/3}) = {\bf H}_{I}(-\pi,0) {\bf x}_{m}(e^{j\omega/5})$ where $\omega \in (-\pi,0)$ and ${\bf y}_{m}(e^{j\omega/3}) = {\bf H}_{I}(0,\pi) {\bf x}_{m}(e^{j\omega/5})$ where $\omega \in (0,\pi)$ together give the required mapping. 

We now consider the following problem  - given two rational numbers $\frac{p_{1}}{q_{1}}$ and $\frac{p_{2}}{q_{2}}$, where $gcd(p_{1},q_{1}) =1$ and $gcd(p_{2},q_{2}) =1$, find out the the smallest modulation matrices that achieve one of the following mappings:
\begin{eqnarray}
S_{x}(\frac{p_{1}\pi}{q_{1}},\frac{p_{2}\pi}{q_{2}}) 
 & \rightarrow & S_{y}(0,\pi) \nonumber \\
S_{x}(-\frac{p_{2}\pi}{q_{2}},-\frac{p_{1}\pi}{q_{1}}) 
 & \rightarrow & S_{y}(-\pi,0) 
    \label{eq_map_1}
\end{eqnarray} 
or
\begin{eqnarray}
S_{x}(\frac{p_{1}\pi}{q_{1}},\frac{p_{2}\pi}{q_{2}}) 
 & \rightarrow & S_{y}(-\pi,0) \nonumber \\
S_{x}(-\frac{p_{2}\pi}{q_{2}},-\frac{p_{1}\pi}{q_{1}}) 
 & \rightarrow & S_{y}(0,\pi) 
    \label{eq_map_2}
\end{eqnarray}  
 
Let $Q = lcm(q_{1},q_{2})$, $P_{1} = \frac{p_{1}Q}{q_{i}}$ and $P_{2} = \frac{p_{2}Q}{q_{2}}$. Let $P = P_{2} - P_{1}$, $m = gcd(P,Q)$, $p = P/m$, $q = Q/m$. The following lemma gives the smallest ideal modulation matrices.

\begin{lemma}
\label{thm_imm}
The ideal modulation matrices will have dimensions $P \times Q$ if and only if either of the $P_{1}$ or $P_{2}$ is even. If both $P_{1}$ and $P_{2}$ are odd, then the ideal modulation matrices will be of dimensions $2P \times 2Q$. Moreover the ideal modulation matrices are given as follows-
\begin{enumerate}
     \item $P_{1}$ is even:
           In this case we will have two modulation matrices 
           ${\bf H}_{I}(-\pi,0)$ and ${\bf H}_{I}(0,\pi)$. These matrices are
           given as follows -
           for $l \in [0,P-1]$, $[{\bf H}_{I}(-\pi,0)]_{lk} = 1$ if $k$ is 
           given by
           \begin{equation}
               k = \left\{ \begin{array}{lc}
                              \frac{P_{1}}{2} + l & 2l < P \\ 
                              (Q - (\frac{P_{1}}{2} + P - l))   & 2l \ge P 
                           \end{array}
                   \right.
               \label{eq_kl_relation_p1}
           \end{equation}  
           else $[{\bf H}_{I}(-\pi,0)]_{lk} = 0$. ${\bf H}_{I}(0,\pi)$  
           is obtained from $[{\bf H}_{I}(-\pi,0)]_{lk}$ as follows

	   \begin{equation}
              [{\bf H}_{I}(0,\pi)]_{lr} 
                      = [{\bf H}_{I}(-\pi,0)]_{(P-l)_{P}(Q-r)_{Q}} 
              \label{eq_lideal_rideal}
           \end{equation} 
 
     \item $P_{2}$ is even:
           In this case also we will have two modulation matrices 
           ${\bf H}_{I}(-\pi,0)$ and ${\bf H}_{I}(0,\pi)$. These matrices 
           are given as follows - 

           for $l \in [0,P-1]$, $[{\bf H}_{I}(-\pi,0)]_{lk} = 1$ if $k$ is 
           given by
           \begin{equation}
               k = \left\{ \begin{array}{lc}
                              (Q - (\frac{P_{2}}{2} -l)) & 2l < P \\ 
                              \frac{P_{1} - P}{2} + l     & 2l \ge P 
                           \end{array}
                   \right.
               \label{eq_kl_relation_p2}
           \end{equation}  
           else $[{\bf H}_{I}(-\pi,0)]_{lk} = 0$. ${\bf H}_{I}(-\pi,0)$ 
           is obtained by using (\ref{eq_lideal_rideal}).

     \item Both $P_{1}$ and $P_{2}$ are odd:
           Since the dimension of the ideal modulation matrices becomes 
           $2P \times 2Q$, this case can be considered similar to the 
           above two cases, with
           $P_{1},P_{2},P,Q$ replaced by $2P_{1},2P_{2},2P,2Q$ in 
           (\ref{eq_kl_relation_p1}) and (\ref{eq_kl_relation_p2}).

\end{enumerate} 
    \label{thm_mod_mat_dim}
\end{lemma}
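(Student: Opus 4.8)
The plan is to reduce the statement to a combinatorial fact about which ``half‑bands'' of the input and output spectra a $\{0,1\}$ modulation matrix can link, and then to one short arithmetic observation. The first step is bookkeeping on the modulation domain once $[-\pi,\pi]$ is split at $0$. For a candidate $(P',Q')$ shift‑invariant system, writing $z=e^{j\omega}$ in (\ref{eq_poly_to_mod_3_omega}), the $k$‑th entry of ${\bf x}_m(e^{j\omega/Q'})$ is $x(e^{j(\omega-2\pi k)/Q'})$, so as $\omega$ runs over $(0,\pi)$, resp.\ $(-\pi,0)$, it sweeps, \emph{in an orientation‑preserving way}, the input half‑band $[\frac{c\pi}{Q'},\frac{(c+1)\pi}{Q'}]$ with $c\equiv-2k$, resp.\ $c\equiv-2k-1\pmod{2Q'}$. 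Hence the input half‑bands reachable for $\omega>0$ are exactly those with even label $c$, and for $\omega<0$ exactly those with odd $c$; the same holds on the output side with $P',l$ in place of $Q',k$. Consequently any $\{0,1\}$ modulation matrix ${\bf H}_I(0,\pi)$ or ${\bf H}_I(-\pi,0)$ can only realise, on each output half‑band, an affine orientation‑preserving copy of an input half‑band \emph{of the same label parity}, and the two matrices are forced to be conjugate‑symmetric images of one another via $\overline{x(e^{j\theta})}=x(e^{-j\theta})$ --- which is exactly (\ref{eq_lideal_rideal}).

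Next I would turn realisability into a divisibility condition. Realising $S_x(\frac{P_1\pi}{Q},\frac{P_2\pi}{Q})\to S_y(0,\pi)$ forces the ``gain'' $Q'/P'$ to equal $Q/P$, so $(P',Q')=(tp,tq)$ for a positive integer $t$; the subband is then a union of $tp$ input half‑bands and its endpoints must sit on the grid $\{j\pi/Q'\}$, which requires $m\mid tP_1$ (hence also $m\mid tP_2$, since $m\mid P_2-P_1$). Matching the desired affine correspondence to the half‑band copies of the first step, the output half‑band labelled $b$ must be fed by the input half‑band labelled $c=b\pm\frac{tP_1}{m}$ for mapping (\ref{eq_map_1}), the sign recording whether $b$ lies in $S_y(0,\pi)$ or $S_y(-\pi,0)$; the parity obstruction then reads precisely ``$\frac{tP_1}{m}$ is even'', i.e.\ $2m\mid tP_1$, and this condition is also sufficient, because it lets one solve the congruences for $l,k$ row by row and write the matrix down. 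Mapping (\ref{eq_map_2}) is handled identically and is realisable iff $2m\mid tP_2$. Thus the smallest admissible dimension factor is $t=\frac{2m}{\gcd(2m,P_1)}$ for (\ref{eq_map_1}) and $t=\frac{2m}{\gcd(2m,P_2)}$ for (\ref{eq_map_2}).

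The arithmetic heart is that $\gcd(m,P_1)=\gcd(m,P_2)=1$. Indeed $m\mid Q$ and $m\mid P_2-P_1$ give $\gcd(m,P_1)=\gcd(m,P_2)$, and this common value divides both $\gcd(Q,P_1)$ and $\gcd(Q,P_2)$; since $\gcd(p_i,q_i)=1$ and $Q=\mathrm{lcm}(q_1,q_2)$ one gets $\gcd(Q,P_i)=Q/q_i$, and $Q/q_1=q_2/\gcd(q_1,q_2)$ and $Q/q_2=q_1/\gcd(q_1,q_2)$ are coprime. Therefore $\gcd(2m,P_i)$ is $2$ when $P_i$ is even and $1$ when $P_i$ is odd, so the minimal $t$ over the two mappings is $m$ exactly when $P_1$ or $P_2$ is even and $2m$ when both are odd; multiplying by $(p,q)$ yields dimensions $P\times Q$ and $2P\times 2Q$ respectively, which is the first assertion of the lemma.

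Finally, with $t$ minimal --- $t=m$, so $(P',Q')=(P,Q)$ --- I would read the half‑band correspondence of the previous steps off as explicit index positions: a row $l$ with $2l<P$ corresponds to an output half‑band in one half of $S_y$ and a row with $2l\ge P$ to the other half (this is the clause split in (\ref{eq_kl_relation_p1}) and (\ref{eq_kl_relation_p2})), and the feeding column $k$ is obtained by solving $2k\equiv 2l\pm P_1\pmod{2Q}$ when $P_1$ is even, after adding the $2\pi$ shift to $\omega$ that is needed whenever the output half‑band wraps past $\pm\pi$ --- which is what produces the $Q-(\cdot)$ branch. This reproduces (\ref{eq_kl_relation_p1}), and, starting from the $P_2$ edge instead, (\ref{eq_kl_relation_p2}); (\ref{eq_lideal_rideal}) was already obtained above. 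The case where $P_1,P_2$ are both odd follows from these two applied verbatim to the rescaled data $(2P_1,2P_2,2P,2Q)$, since raising $t$ from $1$ to $2$ in the reduced problem $(p,q)$ is the same as replacing $(P,Q)$ by $(2P,2Q)$. I expect this last step to be the real labour: the wraparound corrections, the $2l<P$ versus $2l\ge P$ distinction, the modular reductions, and keeping the sign conventions of the modulation operators consistent are all easy to get wrong, whereas the conceptual core --- the even/odd label parity obstruction and the coprimality $\gcd(m,P_i)=1$ --- is short, and once it is in hand the ``iff'' and the two possible dimensions are immediate.
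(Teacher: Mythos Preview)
Your argument is correct and reaches the same conclusion as the paper, but the route is genuinely different. The paper's proof is a direct endpoint calculation: it fixes an $R\times S$ modulation matrix, writes down the single equation needed to send the input edge $\frac{p_1\pi}{q_1}$ to the output edge $0$ (your equations (\ref{eq_cond_x_rh})--(\ref{eq_cond_y_lh}) in the paper), and reads off that $S$ must be a multiple of $Q$ and that $S=Q$ forces $P_1$ even; the explicit formulas are then obtained by tracking where the discrete frequencies $-2l\pi/P$ must come from. There is no coprimality lemma and no half-band bookkeeping --- the minimality over multiples of $Q$ is immediate because the next multiple after $Q$ is already $2Q$.

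Your approach instead builds a parity-labelled half-band calculus, reduces realisability to the divisibility $2m\mid tP_i$, and then uses the arithmetic fact $\gcd(m,P_1)=\gcd(m,P_2)=1$ to pin down the minimal $t$. This is more structural: it explains \emph{why} the obstruction is a single parity bit, it gives (\ref{eq_lideal_rideal}) for free from conjugate symmetry rather than by inspection, and it makes the sufficiency direction (which the paper leaves as ``it can be seen that'') explicit via solvability of the congruences. The price is the extra machinery and the $\gcd$ lemma, which the paper avoids entirely by working directly with multiples of $Q$. Both arguments are sound; yours is more self-contained on the ``iff'' and on the symmetry relation, while the paper's is shorter for the dimension claim alone.
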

\begin{proof}
We first prove that the number of columns in the modulation matrix must be a multiple of $Q$. Let the dimension of the modulation matrix be $R \times S$ and $L = lcm(R,S)$. The modulation equation will then be ${\bf y}_{m}(z^{1/R}) = {\bf H}_{m}(z^{1/L}){\bf x}_{m}(z^{1/S})$. We first prove that $S$ needs to be a multiple of $lcm(q_{1},q_{2})$, i.e. $Q$. The $s$-th term in ${\bf x}_{m}(z^{1/S})$ and $r$-th term in ${\bf y}_{m}(z^{1/R})$ will be $x(z^{1/S} W_{S}^{s})$ and $y(z^{1/R} W_{R}^{r})$ respectively. The term $x(z^{1/S} W_{S}^{s})$ covers the input spectrum $S_{x}(\frac{\theta - 2 \pi s}{S},\frac{\theta - 2 \pi (s-1)}{S})$ for $z \in (e^{j\theta},e^{j(\theta + 2\pi)})$. If $[{\bf H}(z^{1/L})]_{rs} = 1$, then this input spectrum maps to the output spectrum $S_{y}(\frac{\theta - 2 \pi r}{R},\frac{\theta - 2 \pi (r-1)}{R})$. Consider now the mapping given in (\ref{eq_map_1}). In order to map the input frequency $\frac{p_{1}\pi}{q_{1}}$ to the output frequency $0$, we require $\theta$,$r$ and $s$ such that
 
\begin{equation}
\frac{\theta - 2 \pi s}{S} = \frac{p_{1}\pi}{q_{1}} 
    \label{eq_cond_x_rh}
\end{equation}  
and
\begin{equation}
\frac{\theta - 2 \pi r}{R} = 0
    \label{eq_cond_y_lh}
\end{equation}  

Substituting $\theta$ from (\ref{eq_cond_x_rh}) into (\ref{eq_cond_y_lh}), we obtain $\frac{Sp_{1}}{q_{1}} + 2(s-r) = 0$. Since $gcd(p_{1},q_{1}) = 1$, this condition will be fulfilled only if $S$ is a multiple of $q_{1}$. By a similar argument it can be shown that $S$ needs to be a multiple of $q_{2}$ in order to map the input frequency $\frac{p_{2}\pi}{q_{2}}$ to the output frequency $\pi$. Thus $S$ needs to be a multiple of both $q_{1}$ and $q_{2}$, or in other words, $S$ needs to be a multiple of $lcm(q_{1},q_{2})$, i.e. $Q$.

If $S = Q$, then the condition $\frac{Sp_{1}}{q_{1}} + 2(s-r) = 0$ reduces to $P_{1} + 2(s-r) = 0$. Thus if $S=Q$, and the mapping is given by (\ref{eq_map_1}) then we require $P_{1}$ to be even. By a similar argument, it can be shown that if $S=Q$, and the mapping is given by (\ref{eq_map_2}) then we require $P_{2}$ to be even. If both $P_{1}$ and $P_{2}$ are odd, then it is clear from this condition that $S$ is required to be equal to $2Q$.  

We now obtain ${\bf H}_{I}(-\pi,0)$ when $P_{1}$ is even. Consider the frequency $\frac{-2 l\pi}{P}$ in $S_{y}$. For $2 l \le P$, this frequency will be in $S_{y}(-\pi,0)$. The frequency in $S_{x}$ that should map to this frequency is 
\begin{equation}
\alpha = \frac{-P_{1}\pi}{Q} - \frac{2l\pi}{Q}
       = \frac{-2\pi}{Q}(\frac{P_{1}}{2} + l)   
    \label{eq_sx_sy_4}
\end{equation} 
Let $k$ be defined as
\begin{equation}
k = \frac{P_{1}}{2} + l
    \label{eq_xelem_yelem_3}
\end{equation}  
Then if $[{\bf H}_{I}(-\pi,0)]_{lk} = 1$, it would map $S_{x}(-\frac{(2k+1)\pi}{Q},-\frac{2k\pi}{Q})$ to $S_{y}(-\frac{(2l+1)\pi}{P},-\frac{2l\pi}{P})$.  

For $2l > P$, the frequency $\frac{- 2l\pi}{P}$ in $S_{y}$ will be equal to $2\pi - \frac{2l\pi}{P} = \frac{\pi(2P - 2l)}{P}$. The corresponding $S_{x}$ frequency is given by
\begin{equation}
\alpha = \frac{P_{1}\pi}{Q} + \frac{(2P - 2l)\pi}{Q} 
       = \frac{-2\pi}{Q}[Q - (\frac{P_{1}}{2} + P - l)]  
    \label{eq_sx_sy_5}
\end{equation} 
which gives $k$ as
\begin{equation}
k = [Q - (\frac{P_{1}}{2} + P - l)]  
    \label{eq_xelem_yelem_4}
\end{equation} 

It can be seen that ${\bf H}_{I}(-\pi,0)$ and ${\bf H}_{I}(0,\pi)$ together achieve the mapping given in (\ref{eq_map_1}).

The structure of the modulation matrices for rest of the two cases can be obtained in the similar manner.
\end{proof}   
{\it Remark:} Two results close to this lemma appear in \cite{art_vetterli_rationalFB}, proposition 3.2, and in \cite{art_shenoy_multirate}, theorem 1. However the result of this lemma is more general than the above two results and also unifies these results. Proposition 3.2 of \cite{art_vetterli_rationalFB} considers only the case when $P_{1}$ is even and $P > 1$. Theorem 1 of \cite{art_shenoy_multirate} always obtains ideal modulation matrices of dimensions $2P \times 2Q$, that are not the smallest ideal modulation matrices if $P_{1}$ or $P_{2}$ is even.    

{\it Example 1:} Consider again the mapping given in (\ref{eq_map_ex}). It can be seen easily that since in this case $P_{1}$ is even, the ideal modulation matrices given in (\ref{eq_ideal_mod_mat_ex}) can be obtained by (\ref{eq_kl_relation_p1}). 

{\it Example 2:} Consider the following mapping
\begin{eqnarray}
S_{x}(\frac{\pi}{3},\pi) 
 & \rightarrow & S_{y}(0,\pi) \nonumber \\
S_{x}(-\pi,-\frac{\pi}{3}) 
 & \rightarrow & S_{y}(-\pi,0) 
    \label{eq_map_ex2}
\end{eqnarray} 
Since both $P_{1}$ and $P_{2}$ are odd, the ideal modulation matrix dimensions will be $4 \times 6$, and the ideal modulation matrices can be given either from (\ref{eq_kl_relation_p1}) or from (\ref{eq_kl_relation_p2}). Considering (\ref{eq_kl_relation_p2}) for ideal modulation matrices, we obtain
\begin{eqnarray}
{\bf H}_{I}(-\pi,0)  &=& \left[
                            \begin{array}{cccccc}
                            0 & 0 & 0 & 1 & 0 & 0 \\
                            0 & 0 & 0 & 0 & 1 & 0 \\
                            0 & 0 & 0 & 0 & 0 & 1 \\
                            0 & 0 & 1 & 0 & 0 & 0 
                            \end{array}
                         \right]   \nonumber \\
{\bf H}_{I}(0,\pi)   &=& \left[
                            \begin{array}{cccccc}
                            0 & 0 & 0 & 1 & 0 & 0 \\
                            0 & 0 & 0 & 0 & 1 & 0 \\
                            0 & 1 & 0 & 0 & 0 & 0 \\
                            0 & 0 & 1 & 0 & 0 & 0 
                            \end{array}
                         \right]   
    \label{eq_ideal_mod_mat_ex2}
\end{eqnarray} 

If $P$ is even (i.e. both $P_{1}$ and $P_{2}$ are even, or both $P_{1}$ and $P_{2}$ are odd) then both the kinds of mappings given in (\ref{eq_map_1}) and (\ref{eq_map_2}) are possible. We now obtain the relationship of ideal modulation matrices for one mapping with those of the other mapping. Let the dimensions of ideal modulation matrices be $R \times S$. Then the modulation equation will be 
\begin{equation}
{\bf y}_{m}(e^{j\omega/R}) 
       = {\bf H}_{m}(e^{j\omega/K}){\bf x}_{m}(e^{j\omega/S})
    \label{eq_poly_to_mod_3_omega_2}
\end{equation} 
where $K = lcm(R,S)$. Let ${\bf P}(k)$ be the permutation matrix that rotates ${\bf x}_{m}(e^{j\omega/S})$ upward by $k$. Thus if ${\bf x}^{k}_{m}(z) = {\bf P}(k){\bf x}_{m}(z)$ then \\
${\bf x}^{k}_{m}(z) = [x(zW^{k}_{S}),\cdots,x(zW^{S-1}_{S}),x(z),\cdots,x(zW^{k-1}_{S})]$.  \\

The following lemma gives the relationship of the ideal modulation matrices that give mapping (\ref{eq_map_1}) with those that give mapping (\ref{eq_map_2}).

\begin{lemma}
\label{thm_imm_rel}
If $P$ is even then an ideal modulation matrix ${\bf H}^{2}_{I}(\omega_{1},\omega_{2})$ corresponding to  mapping (\ref{eq_map_2}) can be obtained by the ideal modulation matrix ${\bf H}^{1}_{I}(\omega_{1},\omega_{2})$ corresponding to mapping (\ref{eq_map_1}) as follows,
\begin{equation}
{\bf H}^{2}_{I}(\omega_{1},\omega_{2}) = 
        {\bf H}^{1}_{I}(\omega_{1},\omega_{2}) {\bf P}(R/2) 
    \label{eq_map_1_map_2}
\end{equation}    
where $R \times S$ is the dimension of ideal modulation matrices.
     \label{thm_map_1_map_2}
\end{lemma}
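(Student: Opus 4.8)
The plan is to realize the second mapping as the first one followed by a translation of the output spectrum by $\pi$, and then to translate that translation into the modulation domain.

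First, observe that (\ref{eq_map_1}) and (\ref{eq_map_2}) specify exactly the same pair of input bands and differ only in that the two output half-bands $S_y(0,\pi)$ and $S_y(-\pi,0)$ are interchanged. Translating the output spectrum by $\pi$ --- i.e. replacing the output signal $y(n)$ by $(-1)^n y(n)$, equivalently $y(z)$ by $y(-z)$ --- carries $S_y(0,\pi)$ onto $S_y(\pi,2\pi)=S_y(-\pi,0)$ and conversely, and it does so preserving the orientation inside each half-band. Consequently, if a $(P,Q)$-shift-invariant system with modulation matrix ${\bf H}^{1}_{I}$ produces, for a given input pair $x$, the output demanded by (\ref{eq_map_1}), then multiplying that output by $(-1)^n$ produces, for the same $x$, the output demanded by (\ref{eq_map_2}): the band that landed in $S_y(0,\pi)$ now lands in $S_y(-\pi,0)$ with the edge $\frac{p_1\pi}{q_1}$ sitting at $-\pi$ and the edge $\frac{p_2\pi}{q_2}$ at $0$, which is precisely (\ref{eq_map_2}). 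A short check on the affine band-to-band correspondences, exactly as in the proof of Lemma~\ref{thm_imm}, confirms that this translated assignment is again segment-to-segment, i.e. is itself an ideal modulation matrix; this check is also where evenness of $P$ enters, since the $\pi$-translation must send segment boundaries of the output tiling to segment boundaries, which forces the number $R$ of output modulation components to be even (recall $R=P$ when $P_1$ or $P_2$ is even and $R=2P$ otherwise, so $R$ is even exactly when $P$ is).

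Second, I would push the operation ``multiply the output by $(-1)^n$'' through the modulation-domain equation (\ref{eq_poly_to_mod_3_omega_2}). Writing $-1=W_R^{R/2}$, the $r$-th entry $y(zW_R^r)$ of ${\bf y}_m(z)$ becomes $y(-zW_R^r)=y(zW_R^{r+R/2})$, so on the modulation vector the operation is nothing but the cyclic rotation by $R/2$, which is an integer permutation because $R$ is even. Feeding this back into (\ref{eq_poly_to_mod_3_omega_2}) --- for a fixed input one has ${\bf y}^{2}_{m}$ equal to this rotation applied to ${\bf y}^{1}_{m}$, while ${\bf y}^{i}_{m}={\bf H}^{i}_{I}{\bf x}_m$ --- and using the definition of the rotation matrix ${\bf P}(\cdot)$, one obtains the asserted identity ${\bf H}^{2}_{I}={\bf H}^{1}_{I}\,{\bf P}(R/2)$. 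Because the translation argument does not refer to the frequency subset at all, the same identity holds on every $(\omega_1,\omega_2)$. As a sanity check one can also verify it term by term: each row of an ideal modulation matrix has a single $1$, and the formulas (\ref{eq_kl_relation_p1})--(\ref{eq_lideal_rideal}) should show that the position of that $1$ in ${\bf H}^{2}_{I}$ is the position in ${\bf H}^{1}_{I}$ rotated cyclically by $R/2$.

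The delicate part is the bookkeeping in the second step: one must be careful about the sign convention in $W_N$, about whether the half-length rotation acts on the rows (the ${\bf y}_m$ side) or the columns (the ${\bf x}_m$ side) of ${\bf H}_I$ and by $R/2$ modulo $R$ or modulo $S$, and about the case split and wraparound already present in (\ref{eq_kl_relation_p1})--(\ref{eq_kl_relation_p2}). The conceptual crux, however, is the first step: one needs the translated assignment to be orientation-preserving on each half-band, because a $\{0,1\}$-valued modulation matrix can only implement correspondences in which the input and output frequency coordinates move in the same direction; once orientation-preservation is secured, the remaining verification is routine though fiddly.
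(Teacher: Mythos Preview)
Your underlying idea---realize mapping (\ref{eq_map_2}) as mapping (\ref{eq_map_1}) followed by a $\pi$-shift of the output spectrum---is exactly the mechanism behind the paper's proof. The gap is in your second step. Modulating only the output by $(-1)^n$ rotates the entries of ${\bf y}_m$, and from ${\bf y}^{2}_m = (\text{rotation})\,{\bf y}^{1}_m$ together with ${\bf y}^{i}_m = {\bf H}^{i}_I\,{\bf x}_m$ you obtain ${\bf H}^{2}_I = {\bf P}_{R\times R}(R/2)\,{\bf H}^{1}_I$, a \emph{left} multiplication by an $R\times R$ permutation acting on the rows. That is a true relation, but it is not the one asserted in the lemma: there ${\bf P}(R/2)$ is the $S\times S$ matrix defined to act on ${\bf x}_m$, and it sits on the \emph{right}. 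You flag this row/column ambiguity yourself as ``the delicate part'', but you do not resolve it, and for $\{0,1\}$ matrices with a single nonzero per row a row rotation and a column rotation by $R/2$ are in general different operations.

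The paper's device avoids this issue by substituting $\omega \to \omega + R\pi$ in the \emph{entire} modulation equation (\ref{eq_poly_to_mod_3_omega_ideal}). This shifts both sides at once: on the left one gets ${\bf y}_m(e^{j(\pi+\omega/R)})$, which is precisely the modulation vector of the mapping-(\ref{eq_map_2}) output evaluated at the original $\omega$ (this is your $(-1)^n$ observation); on the right, ${\bf x}_m(e^{j\omega/S})$ becomes ${\bf x}_m(e^{j(\omega+R\pi)/S})$, and since $e^{jR\pi/S}=W_S^{R/2}$ (here $R$ even is used so that $R/2$ is an integer shift), this equals ${\bf P}(R/2)\,{\bf x}_m(e^{j\omega/S})$. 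The permutation therefore appears on the ${\bf x}_m$ side automatically, yielding (\ref{eq_map_1_map_2}) directly. If you want to keep your output-only argument, you would need an additional step showing that your row-side identity implies the column-side identity of the lemma; the paper's substitution makes that step unnecessary.
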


\begin{proof}
Consider the modulation domain equation
\begin{equation}
{\bf y}_{m}(e^{j\omega/R}) 
       = {\bf H}^{1}_{I}(\omega_{1},\omega_{2}){\bf x}_{m}(e^{j\omega/S})
    \label{eq_poly_to_mod_3_omega_ideal}
\end{equation} 
changing $\omega/R$ to $\pi + \omega/R$, we obtain
\begin{equation}
{\bf y}_{m}(e^{j(\pi + \omega/R)}) 
 = {\bf H}^{1}_{I}(\omega_{1},\omega_{2}){\bf P}(R/2){\bf x}_{m}(e^{j\omega/S})
    \label{eq_poly_to_mod_3_omega_ideal_2}
\end{equation} 
Let ${\bf H}^{2}_{I}(\omega_{1},\omega_{2}) = {\bf H}^{1}_{I}(\omega_{1},\omega_{2}){\bf P}(R/2)$. It can be seen that if ${\bf H}^{1}_{I}(\omega_{1},\omega_{2})$ gives mapping corresponding to (\ref{eq_map_1}), then ${\bf H}^{2}_{I}(\omega_{1},\omega_{2})$ gives mapping corresponding to (\ref{eq_map_2}).
\end{proof}

In the following section, we assume without loss of generality that either $P_{1}$ or $P_{2}$ is even, since if both of them are odd then we can double the dimensions of the shift-invariant system as shown in the above lemma.

\section{Filters Characterizing the Modulation Matrix}
\label{sec_modulation_mat}

 
We recall from (\ref{eq_h_mod_2}) that the modulation matrix is given by

\begin{equation}
{\bf H}_{m}(z^{1/K}) 
              = \frac{1}{Q} {\bf W}_{P}^{\dagger}{\bf \Gamma}_{P}(z^{-1/P})
                             {\bf H}_{p}(z)
                             {\bf \Gamma}_{mq}(z^{1/Q}){\bf W}_{Q}   
    \label{eq_h_mod_3} 
\end{equation} 

where $K = lcm(P,Q)$. 

Let $m = gcd(P,Q)$, $p = P/m$ and $q = Q/m$. Then $K = mpq$. Let $(a,b)$ be a pair of integers such that $ap+bq = 1$. An element of ${\bf H}_{m}(z^{1/K})$ is given by 
\begin{equation}
[{\bf H}_{m}(z^{1/mpq})]_{lk} 
 = \frac{1}{mq}\sum_{r=0}^{mp-1}\sum_{s=0}^{mq-1}
               {z^{(ps-qr)/mpq}W_{mpq}^{kps-lqr}H_{rs}(z)}  
    \label{eq_elem_h_mod} 
\end{equation} 

We now have the following lemma.

\begin{lemma}
\label{thm_mod_filt}
The modulation matrix can be characterized by $m$ filters. In perticular we assert the following
\begin{enumerate}
    \item For a given integer $d \in \{0..m-1\}$ there are $mpq$
          pairs $(l,k)$ such that $(l-k)_{m} = d$, where $l \in \{0..mp-1\}$ and 
          $k \in \{0..mq-1\}$.
    \item If $(l,k)$ and $(l',k')$ are two pairs of 
          indices such that $(l-k)_{m} = (l'-k')_{m}$ then  
          \begin{equation}
                [{\bf H}_{m}(z^{1/mpq})]_{l'k'} 
                  = [{\bf H}_{m}(z^{1/mpq}W_{mpq}^{g})]_{lk} 
                \label{eq_h_elem_2} 
         \end{equation} 
         for some integer $g$ such that $0 \le g \le mpq-1$.
    \item $g = 0$ iff $(l',k') = (l,k)$.
\end{enumerate} 
\end{lemma}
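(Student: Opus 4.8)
\emph{Proof proposal.} The plan is to carry everything out in the single variable $\zeta := z^{1/mpq}$, so that $z=\zeta^{mpq}$ — and hence every polyphase component $H_{rs}(z)$ appearing in (\ref{eq_elem_h_mod}) — is invariant under $\zeta\mapsto \zeta W_{mpq}^{g}$, while the indices $l,k$ enter that formula only through the phase factor $W_{mpq}^{kps-lqr}$. Item 1 is then pure counting: among $l\in\{0..mp-1\}$ exactly $p$ values lie in any prescribed residue class mod $m$, and among $k\in\{0..mq-1\}$ exactly $q$ do; fixing $d$ and letting $(k)_m$ range over its $m$ values (which forces $l\equiv k+d\pmod m$) yields $m\cdot q\cdot p=mpq$ admissible pairs.

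For item 2, I would substitute $\zeta\mapsto\zeta W_{mpq}^{g}$ into (\ref{eq_elem_h_mod}). Since $(\zeta W_{mpq}^{g})^{mpq}=\zeta^{mpq}$, the factors $\zeta^{ps-qr}$ and $H_{rs}(z)$ are untouched and the phase becomes $W_{mpq}^{g(ps-qr)+kps-lqr}=W_{mpq}^{(k+g)ps-(l+g)qr}$. Comparing with $[{\bf H}_m(\zeta)]_{l'k'}$, which carries $W_{mpq}^{k'ps-l'qr}$, the two agree term by term as soon as
\[
(k'-k-g)\,ps \;\equiv\; (l'-l-g)\,qr \pmod{mpq}\qquad\text{for all }r,s .
\]
Taking $(r,s)=(0,1)$ forces $mq\mid(k'-k-g)$, and $(r,s)=(1,0)$ forces $mp\mid(l'-l-g)$; conversely these two divisibilities make $(k'-k-g)ps$ and $(l'-l-g)qr$ each separately a multiple of $mpq$ (using $p\cdot mq=q\cdot mp=mpq$), so the displayed congruence then holds for all $r,s$. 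Hence item 2 reduces to solving
\[
g\equiv k'-k \pmod{mq},\qquad g\equiv l'-l \pmod{mp}.
\]
Since $\gcd(p,q)=1$ gives $\gcd(mq,mp)=m$ and $\mathrm{lcm}(mq,mp)=mpq$, the Chinese Remainder Theorem makes this system solvable — with a unique solution in $\{0..mpq-1\}$ — precisely when $k'-k\equiv l'-l\pmod m$, i.e. when $(l-k)_m=(l'-k')_m$, which is the hypothesis. Taking $g$ to be that residue proves (\ref{eq_h_elem_2}).

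Item 3 follows from the same characterization of $g$: if $(l',k')=(l,k)$ then $g\equiv0$ mod both $mq$ and $mp$, hence mod $mpq$, so $g=0$; conversely $g=0$ gives $mq\mid(k'-k)$ with $k,k'\in\{0..mq-1\}$, so $k'=k$, and $mp\mid(l'-l)$ with $l,l'\in\{0..mp-1\}$, so $l'=l$.

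The one genuinely delicate step is the sufficiency direction inside item 2 — checking that the two congruences extracted from the special pairs $(r,s)=(0,1),(1,0)$ really force the congruence for \emph{every} $(r,s)$, and that the CRT modulus comes out to be exactly $mpq$ rather than a proper divisor of it. Both rest on $\gcd(p,q)=1$ (equivalently $m=\gcd(P,Q)$), which is where I would call that hypothesis out explicitly. I would also point out that item 3 is to be understood for the canonical $g$ produced by the CRT step, since for particular $H_{rs}$ other values of $g$ might coincidentally give equality.
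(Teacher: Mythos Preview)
Your proof is correct and reaches the same destination as the paper's, but by a somewhat cleaner route. The paper also starts from (\ref{eq_elem_h_mod}) and isolates the phase factor $W_{mpq}^{(k'-k)ps-(l'-l)qr}$; however, instead of invoking the Chinese Remainder Theorem, it sets $(l'-l)_{mp}=t$ and $(k'-k)_{mq}=t+hm$ and then verifies by direct substitution that the explicit choice $g=t+hamp$ works, where $a$ is the B\'ezout coefficient from $ap+bq=1$. Your CRT formulation makes the existence of $g$, its uniqueness modulo $mpq$, and the compatibility condition $(l'-k')_m=(l-k)_m$ all visible at once, and it renders item~3 immediate, whereas the paper argues separately that $g=0$ forces first $t=0$ and then $h=0$. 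The paper's construction has the mild advantage of producing a closed-form expression for $g$ in terms of $a$, which is called on later in Lemma~\ref{thm_idl_mod_filt}; but as a proof of the present statement your argument is tidier and makes the role of $\gcd(p,q)=1$ more transparent. Your closing remark that item~3 concerns the canonical $g$ from the construction, not an arbitrary $g$ giving equality, is a valid caveat that the paper leaves implicit.
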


\begin{proof}
In order to prove the first part, for a given $l$ we count all the $k$ that give $(l-k)_{m} = d$. Such $k$ are given by $k = ((l-d) + rm)_{mq}, 0 \le r \le q-1$. Thus for a given $l$ we have $q$ values of $k$ that give $(l-k)_{m} = d$. Since $l$ takes values from $0$ to $mp-1$, total number of pairs $(l,k)$ satisfying $(l-k)_{m}$ will be $mpq$.

In order to prove the second part, we note that 
\begin{eqnarray}
[{\bf H}_{m}(z^{1/mpq})]_{l'k'} 
 &=& \frac{1}{mq}\sum_{r=0}^{mp-1}\sum_{s=0}^{mq-1}
               {z^{\frac{(ps-qr)}{mpq}}W_{mpq}^{(k'-k)ps-(l'-l)qr}
                W_{mpq}^{kps-lqr}H_{rs}(z)}  
    \label{eq_hm_elem2} 
\end{eqnarray} 

Let $(l'-l)_{mp} = t$ and $(k'-k)_{mq} = t + hm$, for some integers $t$ and $h$, then it can be shown that
\begin{eqnarray}
W_{mpq}^{(k'-k)ps-(l'-l)qr} 
 &=& W_{mpq}^{g(ps-qr)} 
    \label{eq_w_hm} 
\end{eqnarray} 
where $g = (t+hamp)$. Using this in (\ref{eq_hm_elem2}) we get

\begin{eqnarray}
[{\bf H}_{m}(z^{1/mpq})]_{l'k'} 
 &=& [{\bf H}_{m}(z^{1/mpq}W_{mpq}^{g})]_{lk} 
    \label{eq_hm_elem3} 
\end{eqnarray} 
This proves the second part.

To prove the third part, we note that if $(l',k') = (l,k)$ then $t=0, h=0$ and we have $g = 0$. Now if $g = 0$, then $t + ahmp = fmpq$ for some integer $f$. This gives $t = (fq-ah)mp$. Thus $t$ is a multiple of $mp$. But since $0 \le t \le mp-1$, the only possibllity is $t = 0$, and this gives $l' = l$. In order to prove $k' = k$, we note that with $t = 0$ we have $ahmp = fmpq$. Now since $a$ is not a multiple of $q$, $h$ should be a multiple of q. Let $h = vq$, then $(k' - k)_mq = vmq$, giving $v = 0$, and $k' = k$.  

To show that the modulation matrix can be characterized by $m$ filters, for a $d \in \{0..m-1\}$ choose any index $(l,k)$ such that $(l-k)_{m} = d$. Define $H_{d}(z^{1/mpq}) = [{\bf H}_{m}(z^{1/mpq})]_{lk}$. Then from the above results, $mpq$ elements of the modulation matrix, namely $[{\bf H}_{m}(z)]_{l'k'}$ such that $(l'-k')_{m} = d$ can be written as
\begin{equation}
[{\bf H}_{m}(z^{1/mpq})]_{l'k'} = H_{d}(z^{1/mpq}W_{mpq}^{g})
    \label{eq_hm_hd2} 
\end{equation} 
In this manner all the elements of ${\bf H}_{m}(z^{1/mpq})$ can be written in the form of $H_{d}(z^{1/mpq}W_{mpq}^{g})$ for some $d \in \{0..m-1\}$ and some $g \in \{0..mpq-1\}$. 
\end{proof} 

By using the lemma(\ref{thm_imm}) and the lemma(\ref{thm_mod_filt}) we can obtain ideal frequency spectrum of the $m$ filters that characterize the modulation matrix. The following lemma gives ideal frequency spectrum of these $m$ filters:

\begin{lemma}
\label{thm_idl_mod_filt}
Only maximum two filters out of $m$ filters will have non-zero ideal frequency spectrum. Moreover by appropriate choice of the filters, it can be shown that the filters having non-zero ideal frequency spectrum will have frequency spectrum given by $\chi(\frac{v \pi}{mpq}, \frac{(v + P) \pi}{mpq})$ and $\chi(\frac{-(v+P) \pi}{mpq}, -\frac{v \pi}{mpq})$ for some integer $v$.  
\end{lemma}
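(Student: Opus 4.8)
The plan is to combine Lemma~\ref{thm_imm} (the explicit $1$/$0$ pattern of ${\bf H}_I(-\pi,0)$ and ${\bf H}_I(0,\pi)$) with the pull-back identity of Lemma~\ref{thm_mod_filt}, by which every entry of ${\bf H}_m$ on the $d$-th diagonal is a frequency-shifted copy of the single filter $H_d$. First I would note that, for each fixed $d\in\{0..m-1\}$, the assignment $(l,k)\mapsto g$ of Lemma~\ref{thm_mod_filt} is a \emph{bijection} of the $mpq$ pairs with $(l-k)_m=d$ onto $\{0..mpq-1\}$: from the proof of Lemma~\ref{thm_mod_filt} one has $g\equiv l-l_{\mathrm{ref}}\pmod{mp}$ and $g\equiv k-k_{\mathrm{ref}}\pmod{mq}$, so $g$ (with the reference pair fixed) recovers $l$ and $k$, hence the map is injective and, by part~(1), bijective. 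Fixing the reference pair, putting $z=e^{j\omega}$ in (\ref{eq_h_elem_2}), and splitting $\omega\in(-\pi,\pi)$ at $0$ while using that the ideal ${\bf H}_m(e^{j\omega/K})$ is the constant matrix ${\bf H}_I(-\pi,0)$ for $\omega<0$ and ${\bf H}_I(0,\pi)$ for $\omega>0$, this pins the ideal response of $H_d$ down on every subinterval of length $\pi/mpq$: $H_d(e^{j\phi})=[{\bf H}_I(-\pi,0)]_{lk}$ for $\phi\in\big(\tfrac{(2g-1)\pi}{mpq},\tfrac{2g\pi}{mpq}\big)$ and $H_d(e^{j\phi})=[{\bf H}_I(0,\pi)]_{lk}$ for $\phi\in\big(\tfrac{2g\pi}{mpq},\tfrac{(2g+1)\pi}{mpq}\big)$, where $(l,k)$ is the pair on diagonal $d$ with shift $g$.

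The ``at most two'' part then falls out of Lemma~\ref{thm_imm}. In the branch $2l<P$ of (\ref{eq_kl_relation_p1}) one has $l-k=-P_1/2$, and in the branch $2l\ge P$ one has $l-k=\tfrac{P_1}{2}+P-Q$; since $m\mid P$ and $m\mid Q$, both give $(l-k)_m\equiv\pm P_1/2\pmod m$, so all the $1$'s of ${\bf H}_I(-\pi,0)$ lie on the two diagonals $d=(P_1/2)_m$ and $d=(-P_1/2)_m$. The involution $(l,k)\mapsto\big((P-l)_P,(Q-k)_Q\big)$ of (\ref{eq_lideal_rideal}) maps residue $d$ to $-d$ (again because $m\mid P,Q$), so the $1$'s of ${\bf H}_I(0,\pi)$ occupy the same two residues and $\{(P_1/2)_m,(-P_1/2)_m\}$ is preserved by negation; the $P_2$-even case is identical, and the all-odd case reduces to these after the doubling of Lemma~\ref{thm_imm}. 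Hence, by the first paragraph, $H_d\equiv 0$ whenever $d\notin\{(P_1/2)_m,(-P_1/2)_m\}$.

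For the remaining (at most) two filters I would compute the support directly. Taking, on the diagonal $d=(-P_1/2)_m$, the reference pair to be the one in (\ref{eq_kl_relation_p1}) with $l=0$, the congruences for $g$ give simply $g=l$ for every $1$-carrying pair of that diagonal (no column wrap-around occurs for the $l$ involved), and the analogous thing on $d=(P_1/2)_m$. Feeding the $l$-ranges that (\ref{eq_kl_relation_p1}) and its image under (\ref{eq_lideal_rideal}) attach to these two diagonals into the subinterval description of the first paragraph, the ``$1$'' subintervals of each non-trivial $H_d$ should assemble into a single band $\chi\!\big(\tfrac{v\pi}{mpq},\tfrac{(v+P)\pi}{mpq}\big)$, with $v$ the integer fixed by the reference pair; and since (\ref{eq_lideal_rideal}) interchanges the two diagonals while reflecting frequency, the second non-trivial filter then carries the mirror band $\chi\!\big(-\tfrac{(v+P)\pi}{mpq},-\tfrac{v\pi}{mpq}\big)$.

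The hard part is exactly this last step: one must verify that the length-$\pi/mpq$ ``left halves'' contributed by ${\bf H}_I(-\pi,0)$ and the ``right halves'' contributed by ${\bf H}_I(0,\pi)$ interleave along a diagonal \emph{without gaps}, so that they tile an interval rather than a union of shorter blocks. This is what forces the ``appropriate choice of the filters'' in the statement to be made carefully --- both the reference pair and, when $P_1$ and $P_2$ are both even, the choice between the two equivalent constructions (\ref{eq_kl_relation_p1}) and (\ref{eq_kl_relation_p2}) of the ideal modulation matrices --- and it is where the index formulas of Lemma~\ref{thm_imm} and the congruences defining $g$ have to be balanced against one another.
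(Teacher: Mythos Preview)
Your approach is essentially the paper's: read off from Lemma~\ref{thm_imm} that every $1$ of ${\bf H}_I(-\pi,0)$ and ${\bf H}_I(0,\pi)$ sits on one of the two diagonals $d=(\pm P_1/2)_m$ (resp.\ $(\pm P_2/2)_m$), and then use the shift identity of Lemma~\ref{thm_mod_filt} to assemble the support of the two surviving filters. The paper's own argument is in fact \emph{less} detailed than yours on the ``single contiguous band'' verification you flag as the hard part --- it simply fixes the reference entries $[{\bf H}_m]_{0,P_1/2}$ and $[{\bf H}_m]_{0,\,mq-P_1/2}$, states the resulting spectra with the explicit value $v=apP_1/2$, and records the degenerate case $(P_1)_m=0$ where $d_1=d_2$ and the two bands merge into one filter; apart from those concrete endpoints (which you leave implicit) your sketch matches the paper.
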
 

\begin{proof}
In order to prove that only two filters out of $m$ filters have non-zero frequency spectrum, we note that when $P_{1}$ is even, then it can be seen that if ${\bf H}_{I}(-\pi,0)]_{lk} = 1$, then either $(l-k)_{m} = (P_{1}/2)_{m}$ or $(l-k)_{m} = (-P_{1}/2)_{m}$. Let $H_{d1}(z)$ and $H_{d2}(z)$ be two filters out of $m$ filters such that $[{\bf H}_{m}(z^{1/mpq})]_{l,k} = H_{d1}(z^{1/mpq}W_{mpq}^{g})$ when $(l-k)_{m} = (-P_{1}/2)_{m}$ and $[{\bf H}_{m}(z^{1/mpq})]_{l,k} = H_{d2}(z^{1/mpq}W_{mpq}^{g})$ when $(l-k)_{m} = (P_{1}/2)_{m}$, then it can be seen that the ideal frequency spectrum of only $H_{d1}(z)$ and $H_{d2}(z)$ will be non-zero. 

Similarly if $P_{2}$ is even, then $H_{d1}(z)$ and $H_{d2}(z)$ will have nonzero ideal frequency spectrum if $[{\bf H}_{m}(z^{1/mpq})]_{l,k} = H_{d1}(z^{1/mpq}W_{mpq}^{g})$ when $(l-k)_{m} = (-P_{2}/2)_{m}$ and $[{\bf H}_{m}(z^{1/mpq})]_{l,k} = H_{d2}(z^{1/mpq}W_{mpq}^{g})$ when $(l-k)_{m} = (P_{2}/2)_{m}$. Rest of the filters will have zero ideal frequency spectrum.

In order to prove the second part, we consider two cases
\begin{enumerate}
    \item $P_{1}$ is even:
	  For this case if we choose $H_{d1}(z)$ and $H_{d2}(z)$  such that 
          \begin{equation}
	      H_{d1}(z^{1/mpq}W_{mpq}^{v}) 
                    = [{\bf H}_{m}(z^{1/mpq})]_{0,P_{1}/2}
              \label{eq_h1_1} 
          \end{equation} 
          \begin{equation}
              H_{d2}(z^{1/mpq}W_{mpq}^{-v}) 
                    = [{\bf H}_{m}(z^{1/mpq})]_{0, (mq - \frac{P_{1}}{2})}
               \label{eq_h2_1} 
          \end{equation} 

	  then it can be shown that  

          \begin{eqnarray}
          S_{hd1}(-\pi,\pi) &=& \chi(\frac{-(2v+P)\pi}{mpq}, 
                                    \frac{-2v\pi}{mpq}) \nonumber \\
          S_{hd2}(-\pi,\pi) &=& \chi(\frac{2v\pi}{mpq}, 
                                    \frac{(2v+P)\pi}{mpq}) 
              \label{eq_ideal_h1} 
          \end{eqnarray} 

	  If $(P_{1})_{m} = 0$ then we have $d1 = d2$ and for such a case
          \begin{eqnarray}
          S_{hd1}(-\pi,\pi) &=& \chi(\frac{-(2v+P)\pi}{mpq}, 
                                    \frac{-2v\pi}{mpq}) \cup 
                                \chi(\frac{2v\pi}{mpq}, 
                                    \frac{(2v+P)\pi}{mpq}) 
              \label{eq_ideal_h1_2} 
          \end{eqnarray} 
	  with $v = \frac{apP_{1}}{2}$.

    \item $P_{2}$ is even:
	  For this case if we choose $H_{d1}(z)$ and $H_{d2}(z)$  such that 
          \begin{equation}
	      H_{d1}(z^{1/mpq}W_{mpq}^{v}) 
                    = [{\bf H}_{m}(z^{1/mpq})]_{0,P_{2}/2}
              \label{eq_h1_2} 
          \end{equation} 
          \begin{equation}
              H_{d2}(z^{1/mpq}W_{mpq}^{-v}) 
                    = [{\bf H}_{m}(z^{1/mpq})]_{0, (mq - \frac{P_{2}}{2})}
               \label{eq_h2_2} 
          \end{equation} 

	  then it can be shown that  

          \begin{eqnarray}
          S_{hd1}(-\pi,\pi) &=& \chi(\frac{-2v\pi}{mpq}, 
                                    \frac{-(2v-P)\pi}{mpq}) \nonumber \\
          S_{hd2}(-\pi,\pi) &=& \chi(\frac{(2v-P)\pi}{mpq}, 
                                    \frac{2v\pi}{mpq}) 
              \label{eq_ideal_h2_2} 
          \end{eqnarray} 

	  Again if $(P_{2})_{m} = 0$ then we have $d1 = d2$ and for such a case
          \begin{eqnarray}
          S_{hd1}(-\pi,\pi) &=& \chi(\frac{-2v\pi}{mpq}, 
                                    \frac{-(2v-P)\pi}{mpq}) \cup
                                \chi(\frac{(2v-P)\pi}{mpq}, 
                                    \frac{2v\pi}{mpq}) 
              \label{eq_ideal_h2_2_2} 
          \end{eqnarray} 
	  with $v = \frac{apP_{2}}{2}$.
\end{enumerate}    

\end{proof} 

The ideal frequency spectra of these filters are  used to form an objective function, minimizing which would give us desired paraunitary polyphase matrix.

\section{Design of the Paraunitary Polyphase Matrix}
\label{sec_design}

\subsection{Formation of an objective function} 
Given rational numbers $\{\frac{p_{0}}{q_{0}},\cdots, \frac{p_{N-1}}{q_{N-1}}\}$, our objective is to design a paraunitary polyphase matrix ${\bf H}_{p}(z)$ that gives the required splitting of the input frequency spectrum $S_{x}(0,\pi)$into $N$ bands. 

Consider $n$-th channel of such a filter bank. The part of the input spectrum corresponding to this band is $S_{x}(\frac{\bar{p}_{n-1}}{\bar{q}_{n-1}}\pi,\frac{\bar{p}_{n}}{\bar{q}_{n}}\pi)$, where $\frac{\bar{p}_{n-1}}{\bar{q}_{n-1}} = \sum_{i=0}^{n}{\frac{p_{i}}{q_{i}}}$ and $\frac{\bar{p}_{n}}{\bar{q}_{n}} = \sum_{i=0}^{n+1}{\frac{p_{i}}{q_{i}}}$. Let $Q_{n} = lcm(\bar{q}_{n-1},\bar{q}_{n})$ and $P_{n} = Q_{n}\left(\frac{\bar{p}_{n}}{\bar{q}_{n}} - \frac{\bar{p}_{n-1}}{\bar{q}_{n-1}}\right)$. It is clear from lemma \ref{thm_mod_mat_dim} that the ideal modulation matrices for this channel will be of the size $P_{n} \times Q_{n}$ or $2P_{n} \times 2Q_{n}$. Let the dimensions of ideal modulation matrix be denoted by $\bar{P}_{n} \times \bar{Q}_{n}$. 

Since we have to realize a paraunitary polyphase matrix for the filter bank, it is required that the polyphase matrices for all the channels should have same number of columns. Let $S = lcm(\bar{Q}_{0},\bar{Q}_{1},\cdots,\bar{Q}_{N-1})$,$R_{n} = p_{n}S/q_{n}$,$K_{n} = \sum_{j=0}^{i+1}{R_{n}} - \sum_{j=0}^{i}{R_{n}}$ and $M_{n} = gcd(K_{n},S)$. Then the polyphase matrix for $n$-th channel, denoted by ${\bf H}_{pn}(z)$, will be of the dimensions $K_{n} \times S$ and the polyphase matrix for the filter bank will be given by
\begin{equation}
{\bf H}_{p}(z) = \left[
                    \begin{array}{c}
                    {\bf H}_{p0}(z) \\
                    {\bf H}_{p1}(z) \\
                    \vdots          \\ 
                    {\bf H}_{p(N-1)}(z) 
                    \end{array}
                 \right]  
    \label{eq_filtbank_polyphase}
\end{equation} 
 
Let the modulation matrix corresponding to ${\bf H}_{pn}(z)$ be denoted by ${\bf H}_{mn}(z)$. The modulation matrix ${\bf H}_{mn}(z)$ will be characterized by $M_{n}$ filters. Let these filters be denoted by $H_{in}(z), 0 \le i \le M_{n}-1$. Since ${\bf H}_{pn}(z)$ is paraunitary, from (\ref{eq_h_mod_2}) it can be seen that ${\bf H}_{mn}(z)$ is also paraunitary. By using the paraunitary condition on the modulation matrix ${\bf H}_{m}(z)$, i.e. ${\bf H}_{m}(z)\tilde{{\bf H}}_{m}(z) = {\bf I}$, it can be shown that the filters characterizing the modulation matrix satisfy 
\begin{equation}
\sum_{i=0}^{M_{n}-1}{\left|H_{in}(e^{j\omega})\right|^{2}} = K_{n} 
    \label{eq_abs_mod_filt}
\end{equation} 
where $K_{n}$ is some constant. From this, it can be seen that if the stopband energy of all the filters is minimized, then the passband energy would get maximized. Thus we can define an objective function, based on stopband energies of filters in all the channels. This is done as follows:

\begin{enumerate}
     \item Find out the dimensions of the polyphase matrix ${\bf H}_{p}(z)$
           as explained above. 
     \item Define a paraunitary polyphase matrix ${\bf H}_{p}(z,\Theta)$ that
           depends on the parameters $\Theta$. 
     \item Find out the ideal frequency response for all the filters
           $H_{in}(z), 0 \le i \le M_{n}-1, 0 \le n \le N-1$.
     \item Choose a transition bandwidth $\epsilon$. Using this transition
           bandwidth and ideal frequency response of $H_{in}(z)$, find out
           its stopband $\gamma_{in}$
     \item Define the objective function
           \begin{equation}
               D(\Theta) = \sum_{n=0}^{N-1}{\sum_{i=0}^{M_{i}-1}
                                 { \int_{\gamma_{in}}
                                 {\left|H_{in}(e^{j\omega})
                                  \right|^{2} d\omega}}}
                       \label{eq_opt}
           \end{equation} 
\end{enumerate} 

By minimizing $D(\Theta)$ with respect to $\Theta$ we can obtain the required polyphase matrix ${\bf H}_{p}(z,\Theta)$.

\subsection{ Design Examples}

For the design examples given below, we will use Given's factorization for the paraunitary matrices \cite{bk_vetterli_filterbank}. Thus if a paraunitary matrix has size $N \times N$ and order $K$, then the parameter vector $\Theta$ will have $K(N-1) + NC2$ parameters.
 
{\it Example 1}: In this example we consider the rational splitting $\{\frac{2}{5},\frac{1}{5}, \frac{2}{5}\}$.

Since for the third channel we have $P_{1} = 3$ and $P_{2} = 5$, accordng to lemma(\ref{thm_imm}) the size of polyphase matrix for this channel will be $4 \times 10$. Thus the complete filter bank can be realized by a $10 \times 10$ polyphase matrix, with first $4$ rows for first channel, next $2$ rows for second channel and last $4$ rows for the last channel. 

By applying lemma(\ref{thm_mod_filt}) and lemma(\ref{thm_idl_mod_filt}) it can be seen that each channel can be characterized by two filters and only one filter out of these two filter has non-zero ideal frequency spectrum. 

A polyphase matrix is realized using the procedure outlined above with transition bandwidth equal to $pi/20$ and the number of stages in the polyphase matrix equal to $7$. The ideal frequency spectrum and actual frequency spectrum of the filters are given in Fig.(\ref{fig_filts_ex_1}). In these figures stopband of ideal filters is shown at $-60dB$ line.  




\begin{figure*}[t]
\centerline{\includegraphics[scale=0.25]{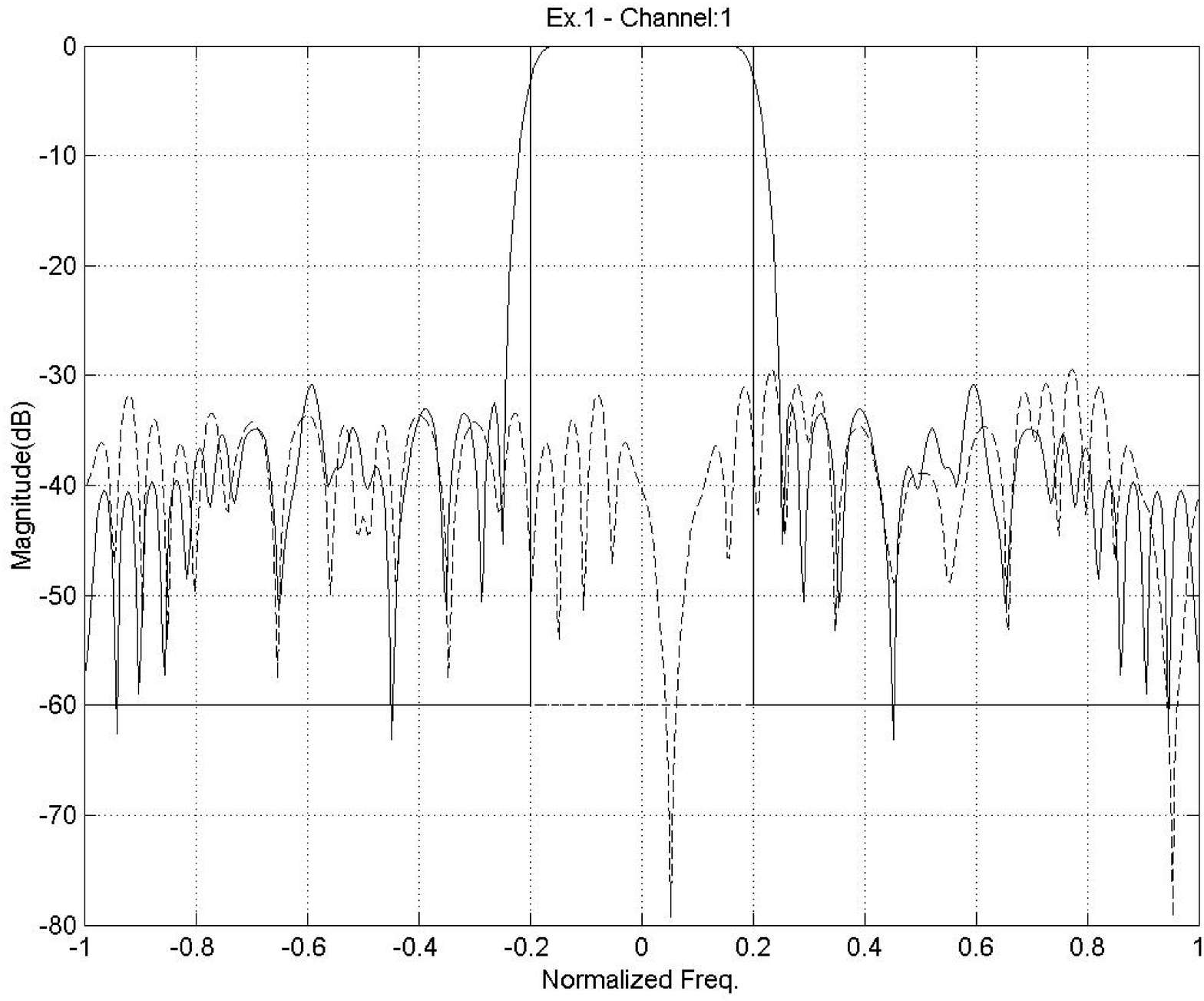}}
\centerline{\includegraphics[scale=0.25]{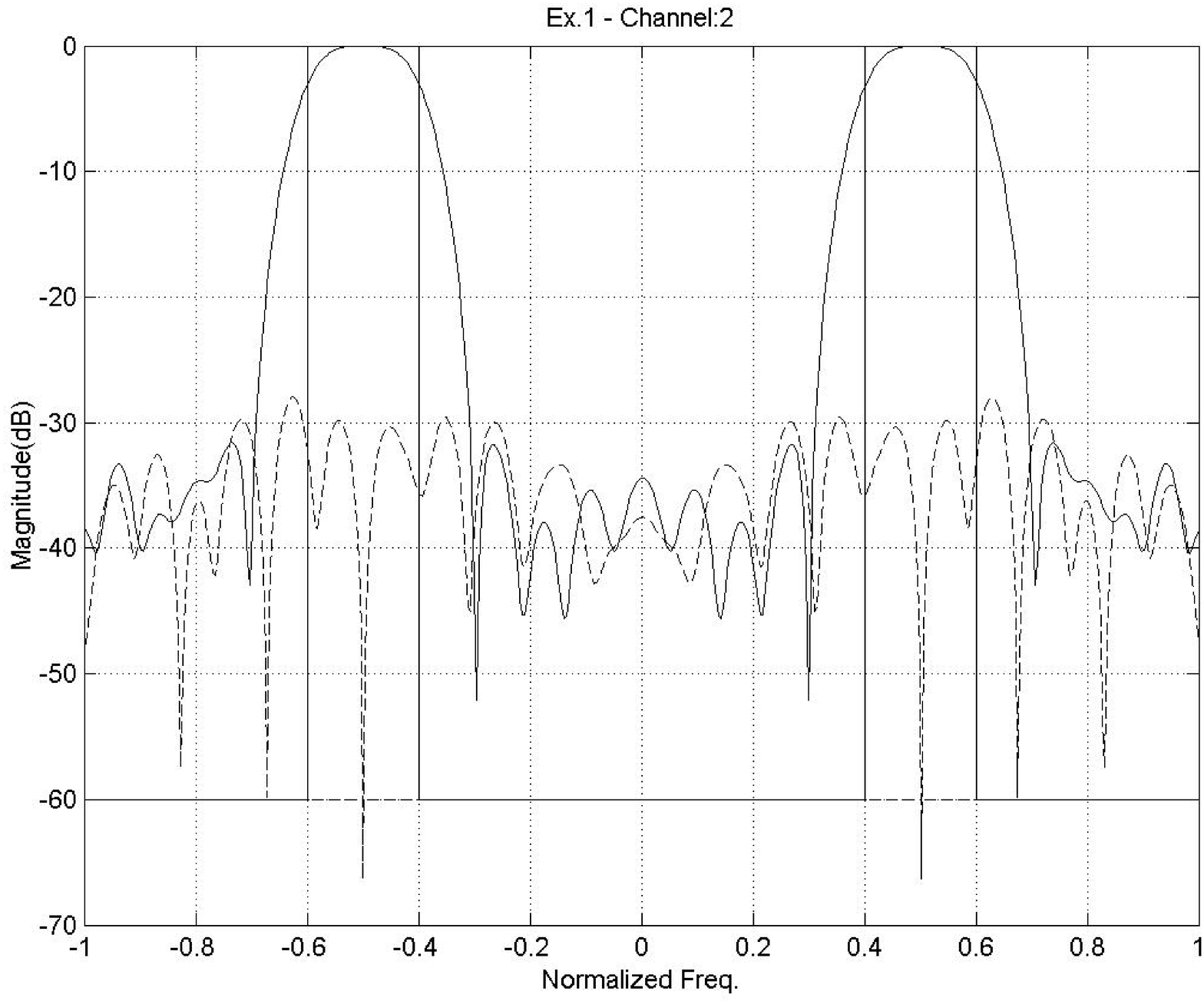},
            \includegraphics[scale=0.25]{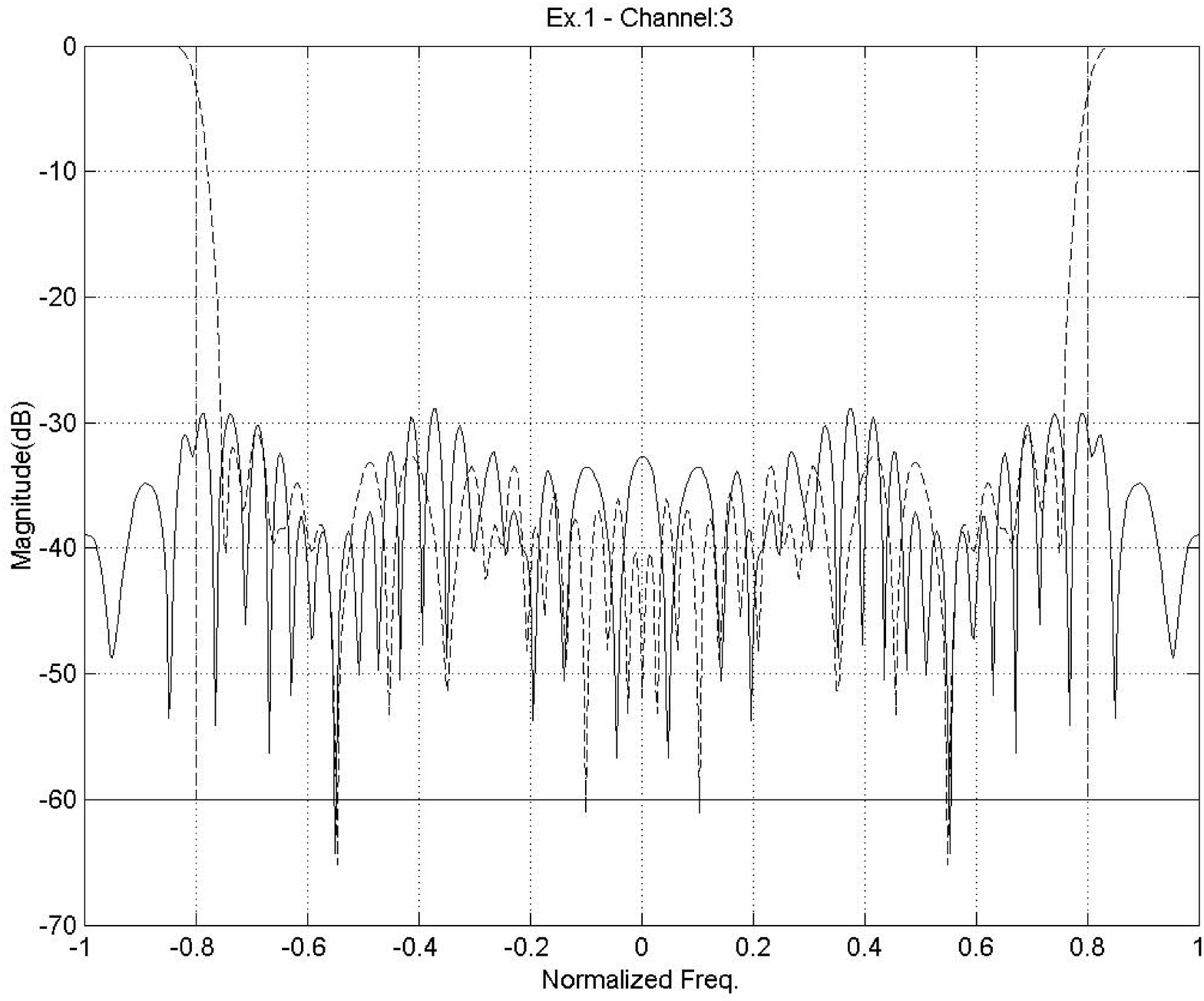}}
\caption{Filters for three channels of the filterbank in Exanple 1 - Ideal frequency spectra are also shown with actual frequency spactra. The stopband of ideal frequency spectra is shown at -60dB}
    \label{fig_filts_ex_1}
\end{figure*}



{\it Example 2}: In this example we consider the rational splitting $\{\frac{2}{9},\frac{1}{3},\frac{1}{3},\frac{1}{9}\}$.

The size of the polyphase matrix in this case would be $9 \time 9$. First channel will have only one filter, next two channels will have three filters and last channel will have only one filter. For second and third channel, two filters will have nonzero frequency spectrum. The ideal and realized filters are shown in Fig.(\ref{fig_filts_ex_1}). For this realization the transition bandwidth is taken as $\pi/20$ and the polyphase matrix has $7$ stages.

\begin{figure*}[t]
\centerline{\includegraphics[scale=0.25]{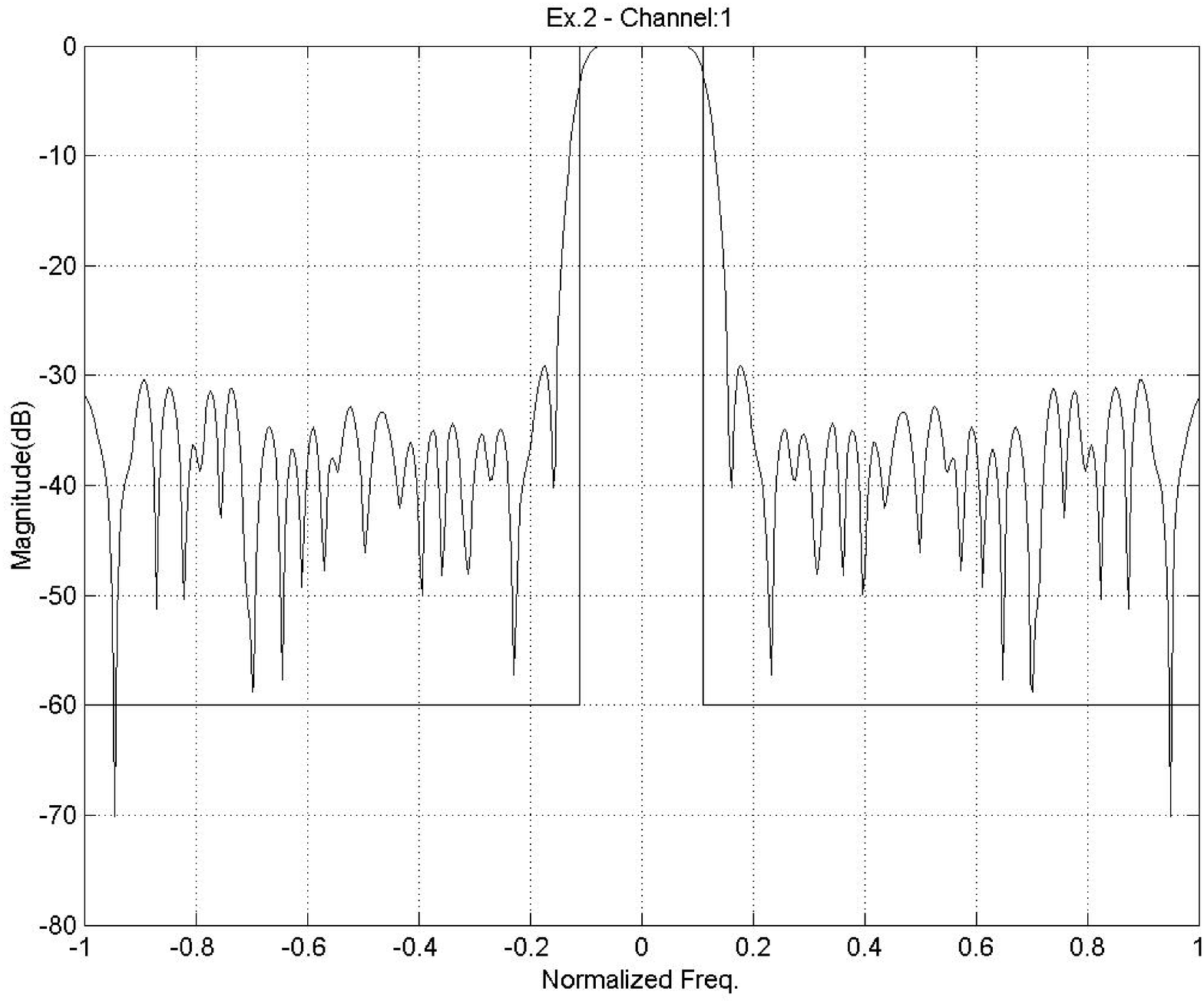},
            \includegraphics[scale=0.25]{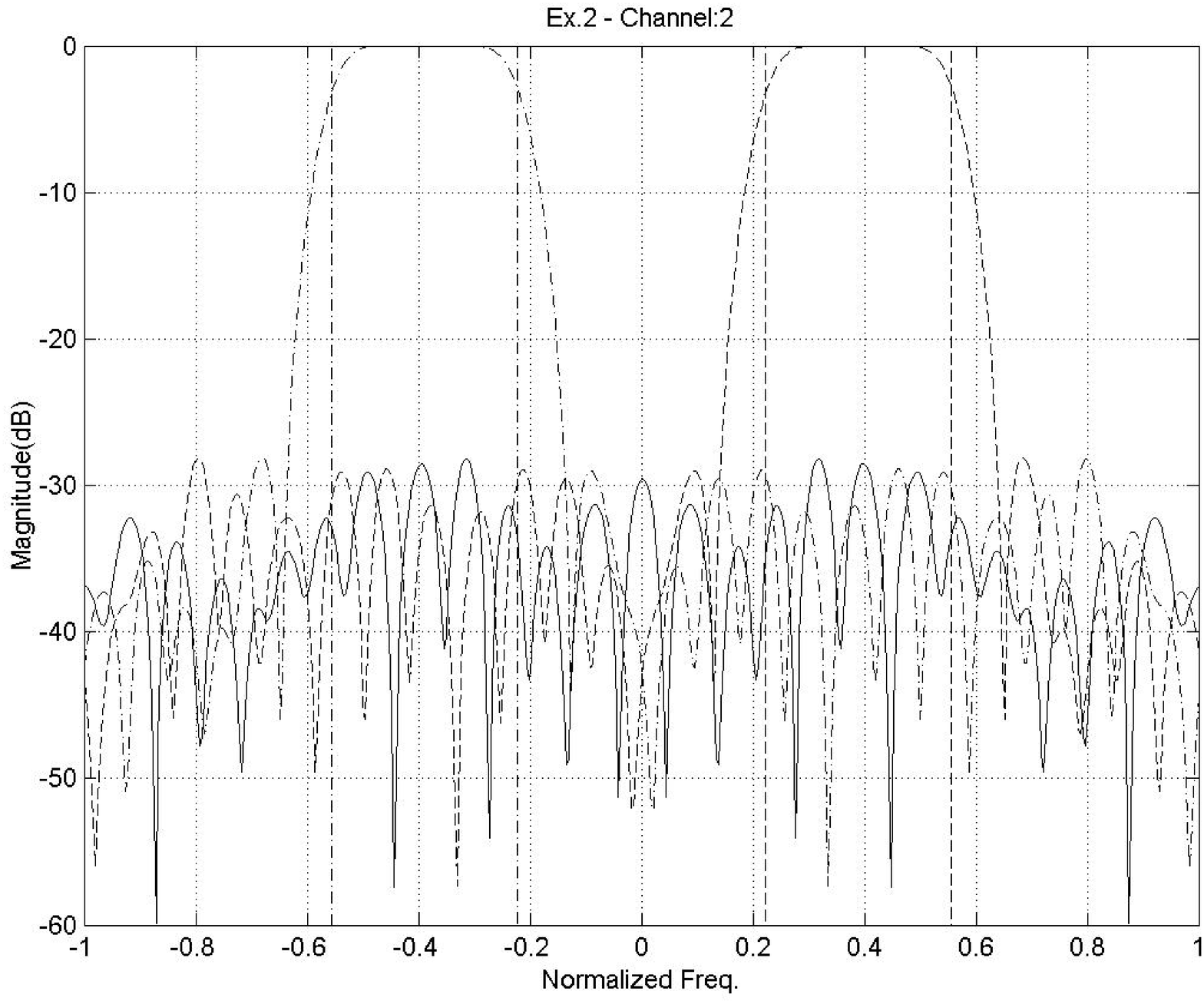}}
\centerline{\includegraphics[scale=0.25]{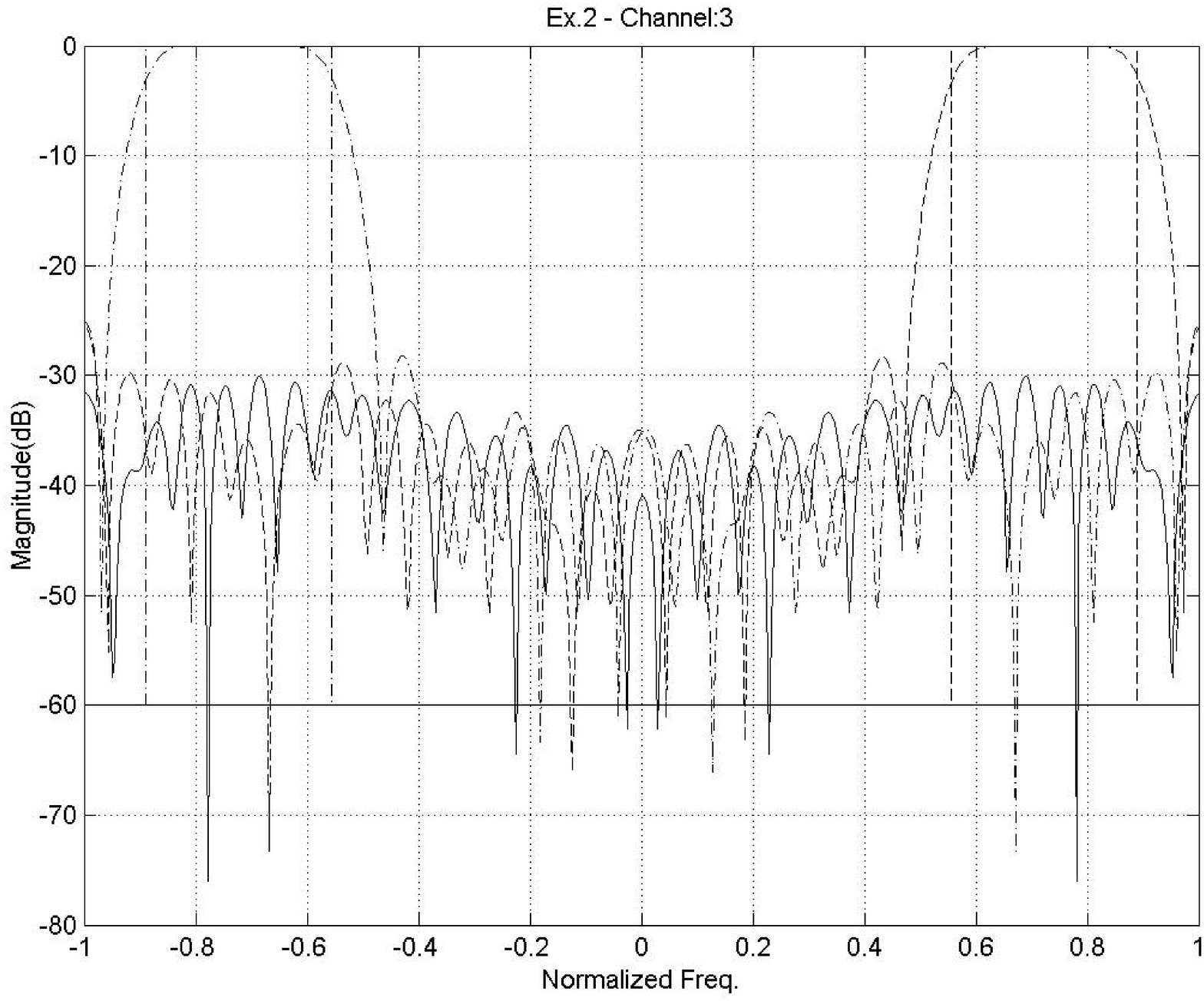},
            \includegraphics[scale=0.25]{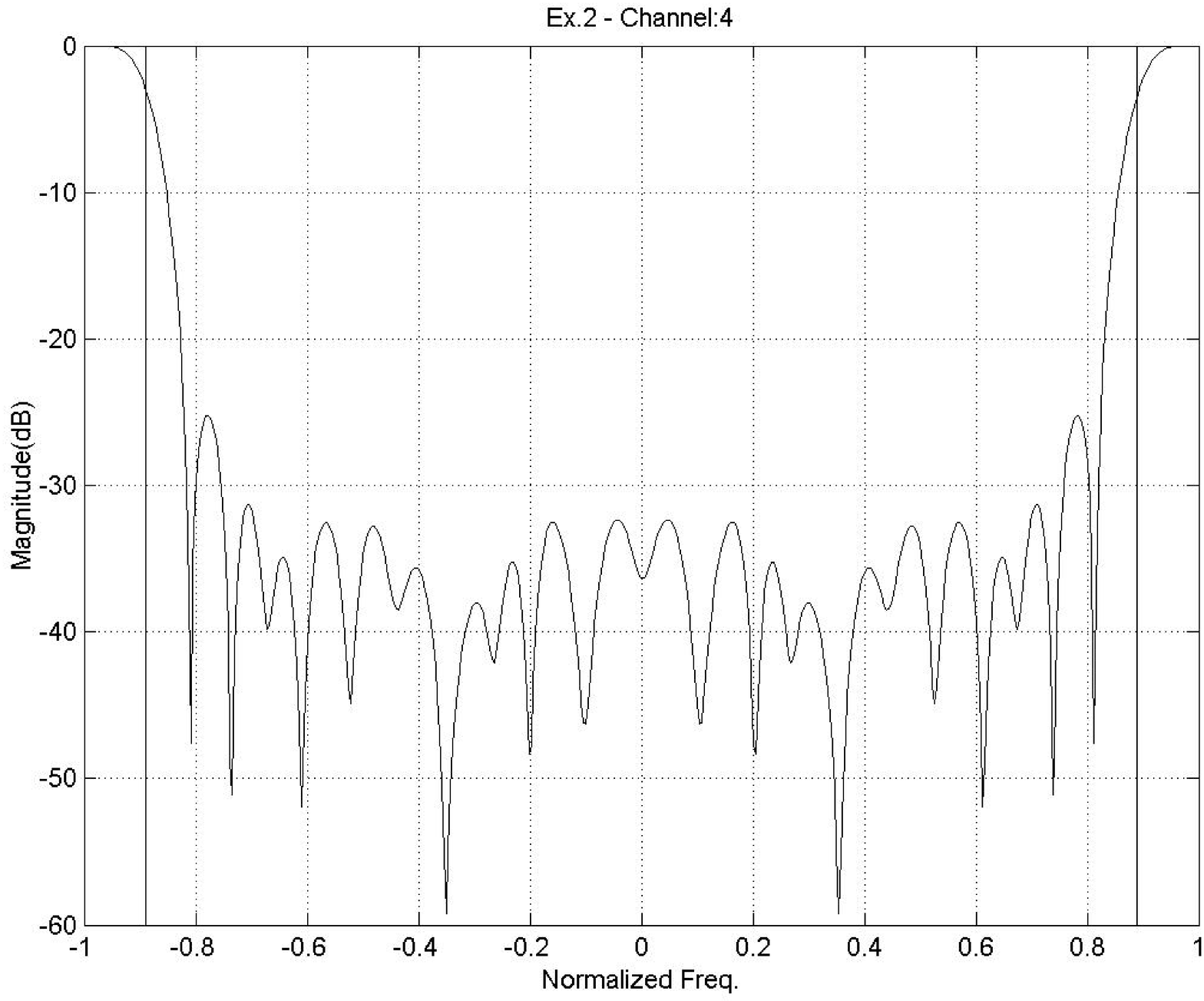}}
\caption{Filters for Example Two - Ideal frequency spectra are also shown with actual frequency spectra. The stopband of ideal frequency spectra is shown at -60dB}
    \label{fig_filts_ex_2}
\end{figure*}

\section{Conclusion}
\label{sec_conclude}

In this paper we have presented a design method to design rational filter banks using $(P,Q)$ shift-invariant systems. We have obtained minimum dimension of a $(P,Q)$-shift invariant system for ideal mapping. Next we have obtained a set of $gcd(P,Q)$ filters characterizing $(P,Q)$ shift-invariant systems. Ideal frequency spectrum of these filters are obtianed obtaining and then paraunitary polyphase matrix of the filter bank is obtained by minimizing the stopband energies of these filters.

Since the design method is based on $(P,Q)$-shift-invariant systems, filter banks with arbitrary rational splitting can be designed by this method. Further improvements in the method are possible by finding out a good ``starting point'' for objective function minimization. 


\bibliographystyle{plain}


\end{document}